\newtheorem{thm}{Theorem}
\newtheorem{lemma}{Lemma}
\begin{document}

\begin{CJK*}{GBK}{song}

\begin{center}
\LARGE\bf Tighter monogamy relations of entanglement measures based on fidelity
\end{center}

\begin{center}
\rm  Mei-Ming Zhang,$^1$ \  Naihuan Jing,$^{1, 2,*}$
\end{center}

\begin{center}
\begin{footnotesize} \sl
$^1$ Department of Mathematics, Shanghai University, Shanghai 200444, China  %

$^2$ Department of Mathematics, North Carolina State University, Raleigh, NC 27695, USA

$^*$ Corresponding author: jing@ncsu.edu

\end{footnotesize}
\end{center}

\begin{center}
\begin{minipage}{15.5cm}
\parindent 20pt\footnotesize
We study the Bures measure of entanglement and the geometric measure of entanglement as special cases of entanglement measures based on fidelity, and find their tighter monogamy inequalities over tri-qubit systems as well as multi-qubit systems. 
Furthermore, we derive the monogamy inequality of concurrence for qudit quantum systems by projecting higher-dimensional states to qubit substates.
\end{minipage}
\end{center}

\begin{center}
\begin{minipage}{15.5cm}
\begin{minipage}[t]{2.3cm}{\bf Keywords:}\end{minipage}
\begin{minipage}[t]{13.1cm}
Monogamy, Bures measure of entanglement, Geometric measure of entanglement, Concurrence
\end{minipage}\par\vglue8pt

\end{minipage}
\end{center}

\section{Introduction}
Quantum entanglement is an indispensable resource for quantum information processing [1] which distinguishes quantum mechanics from the classical one.
In contrast to classical correlations, quantum entanglement has an interesting feature in that entanglement cannot be freely shared among systems. For instance, if two parties are maximally entangled in a multi-partite systems, then none of them could share entanglement with any part of the remaining system.
We call this phenomenon quantum entanglement monogamy [2].
The monogamy relation of entanglement serves to characterize different kinds of entanglement distribution.

Entanglement monogamy was first characterized as an inequality in three-qubit systems by Coffman-Kundu-Wootters (CKW) [3], i.e, $E(\rho_{A|BC})\geq E(\rho_{AB})+E(\rho_{AC})$ where $E(\rho_{A|BC})=C^2(\rho_{A|BC})$ represents the squashed concurrence of $\rho_{A|BC}$ under bipartition ${A}$ and $BC$, $\rho_{AB}$ and $\rho_{AC}$ are the reduced density matrices of the tri-qubit state $\rho_{ABC}$ respectively.
The concurrence of a two-qubit mixed state $\rho$ is given by the analytic formula $C(\rho)=\max\{\lambda_1-\lambda_2-\lambda_3-\lambda_4, 0\}$, where $\lambda_{1},\lambda_{2},\lambda_{3},\lambda_{4}$, are the square roots of nonnegative eigenvalues of the matrix $\rho(\sigma_y\otimes\sigma_y)\rho^*(\sigma_y\otimes\sigma_y)$ arranged in nonincreasing order, $\sigma_y$ is the Pauli matrix, and $\rho^*$ denotes the complex conjugate of $\rho$ [3].
Later, Osborne and Verstraete proved that the CKW inequality also holds in an n-qubit system [4].
Other types of monogamy relations for entanglement were also proposed, notably the entanglement negativity [5-8], entanglement of formation [9,10], Tsallis q-entropy [11-13], R\'{e}nyi-$\alpha$ entanglement [14-16] and unified-$(q,s)$ entanglement [17].
In recent years, monogamy inequalities of one class of entanglement measures based on fidelity, such as the Bures measure of entanglement [18, 19] and the geometric measure of entanglement [20] were discussed in [21, 22].
All these monogamy relations were basically presented for qubit quantum states, however in higher-dimensional systems, some quantum states were found violating the CKW inequality [23, 24].
The monogamy relation of an entanglement measure for qudits only was also given with a strong conjecture that seems to have no obvious counterexamples [6].
In this paper, we generalize the monogamy inequality in qudit quantum systems by projecting higher-dimensional states to qubit substates.

This article is organized as follows. In Section 2, we derive the tightened monogamy inequalities in an arbitrary tripartite mixed state based on the Bures measure of entanglement and the geometric measure of entanglement. Then the monogamy relation is generalized to multipartite quantum systems.
By detailed examples, our results are seen to be superior to the previously published results.
In Section 3, we derive the monogamy inequalities of concurrence in an arbitrary dimensional tripartite systems by projecting high-dimensional states to $2\otimes 2\otimes 2$ sub-states and we generalize the results for the multipartite quantum systems. Comments and conclusions are given in Section 4.

\section{Tighter monogamy relations of entanglement measures based on fidelity}
Recall that the fidelity of separability is defined by [25]:
\begin{eqnarray}
F_S(\rho_{A_1A_2})=\max_{\sigma_{A_1A_2}\in S}F(\rho_{A_1A_2},\sigma_{A_1A_2})
\end{eqnarray}
where $S$ is the set of separable states, the maximum is taken over all separable states $\sigma_{A_1A_2}$ in S and $F(\rho_{A_1A_2}, \sigma_{A_1A_2})=[tr((\sqrt{\rho_{A_1A_2}}\sigma_{A_1A_2}\sqrt{\rho_{A_1A_2}})^{\frac12})]^2$.
Now we consider the entanglement measures based on fidelity for the Bures measure of entanglement and the geometric measure of entanglement, which are defined respectively by [18, 20]:
\begin{eqnarray}
M_B(\rho_{A_1A_2})&=&\min_{\sigma_{A_1A_2}\in S}(2-2\sqrt{F(\rho_{A_1A_2},\sigma_{A_1A_2})})=2-2\sqrt{F_S(\rho_{A_1A_2})},\\
M_G(\rho_{A_1A_2})&=&\min_{\sigma_{A_1A_2}\in S}(1-F(\rho_{A_1A_2},\sigma_{A_1A_2}))=1-F_S(\rho_{A_1A_2}).
\end{eqnarray}
For an arbitrary two-qubit mixed state, the analytical expressions for the Bures measure of entanglement and the geometric
measure of entanglement in terms of the concurrence are given as follows [25]:
\begin{eqnarray}
M_B(\rho_{A_1A_2})&=&f_B(C(\rho_{A_1A_2}))=2-2\sqrt{\frac{1+\sqrt{1-C^2(\rho_{A_1A_2})}}{2}},\\
M_G(\rho_{A_1A_2})&=&f_G(C(\rho_{A_1A_2}))=\frac{1-\sqrt{1-C^2(\rho_{A_1A_2})}}{2},
\end{eqnarray}
where $f_B(x)=2-2\sqrt{\frac{1+\sqrt{1-x^2}}{2}}$ and $f_G(x)=\frac{1-\sqrt{1-x^2}}{2}$ are monotonically increasing functions in $0\leq x\leq1$. For $2\otimes d$ $(d\geq2)$ mixed states $\rho_{A_1A_2}$, one has the relations $M_B(\rho_{A_1A_2})\geq f_B(C(\rho_{A_1A_2}))$ and $M_G(\rho_{A_1A_2})\geq f_G(C(\rho_{A_1A_2}))$ in general, and  the equalities hold for the special cases of pure states [22]
\begin{lemma}\label{lemma:1}

(1) If $t\geq k^\omega\geq k\geq1$, $\omega\geq1$, and $0\leq x\leq\frac{1}{2}$, we have
\begin{eqnarray}
(1+t)^x\geq(\frac{1}{2})^x+\frac{(1+k^\omega)^x-(\frac{1}{2})^x}{k^{\omega x}}t^x.
\end{eqnarray}

(2) If $0\leq t\leq k^\omega\leq k\leq1$, $\omega\geq1$, and $x\geq 1$, we have
\begin{eqnarray}
(1+t)^x\geq(\frac{1}{2})^x+\frac{(1+k^\omega)^x-(\frac{1}{2})^x}{k^{\omega x}}t^x.
\end{eqnarray}
\end{lemma}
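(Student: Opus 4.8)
The plan is to treat both parts by analysing, for fixed $k,\omega,x$, the single-variable function $g(t)=(1+t)^x-(\tfrac12)^x-C\,t^x$, where $C:=\big[(1+k^\omega)^x-(\tfrac12)^x\big]k^{-\omega x}$. Note first that, since $\omega\ge 1$, the hypotheses $k^\omega\ge k\ge 1$ and $k^\omega\le k\le 1$ simply amount to $k\ge1$ and $0<k\le1$, and only $k^\omega\ge1$ (resp.\ $0<k^\omega\le1$) will actually be used. The coefficient $C$ is positive in both regimes because $1+k^\omega>\tfrac12$. The key common fact is that $t=k^\omega$ is a root: plugging in $t=k^\omega$ makes $C\,t^x=(1+k^\omega)^x-(\tfrac12)^x$, so $g(k^\omega)=0$. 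Since the asserted inequality is exactly $g(t)\ge 0$, it then suffices to control the monotonicity of $g$, equivalently the sign of $g'(t)=x\big[(1+t)^{x-1}-C\,t^{x-1}\big]$; dividing by $t^{x-1}>0$, that sign equals the sign of $\big(\tfrac{1+t}{t}\big)^{x-1}-C$.

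For part (1) we have $0\le x\le\tfrac12$ and $t\ge k^\omega\ge 1$, and the goal is to show $g$ is nondecreasing on $[k^\omega,\infty)$, whence $g(t)\ge g(k^\omega)=0$. Because $x-1<0$, the condition $g'(t)\ge 0$ rearranges to $\big(\tfrac{t}{1+t}\big)^{1-x}\ge C$, and since $t/(1+t)$ is increasing and $1-x>0$ the left side is increasing in $t$; hence it is enough to check the inequality at the left endpoint $t=k^\omega$. Clearing denominators there reduces the claim to the clean inequality $(\tfrac12)^x(1+k^\omega)^{1-x}\ge 1$, which follows from $1+k^\omega\ge 2$ and $1-2x\ge 0$ via $(\tfrac12)^x(1+k^\omega)^{1-x}\ge(\tfrac12)^x\,2^{1-x}=2^{1-2x}\ge 1$. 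This establishes (1).

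For part (2) we have $x\ge 1$ and $0\le t\le k^\omega\le k\le 1$, so $k>0$; the case $t=0$ is immediate since $g(0)=1-(\tfrac12)^x\ge 0$. Here I would show instead that $g$ is first nondecreasing and then nonincreasing on $(0,\infty)$: the sign of $g'(t)$ equals that of $\big(1+\tfrac1t\big)^{x-1}-C$, and $t\mapsto\big(1+\tfrac1t\big)^{x-1}$ is decreasing on $(0,\infty)$ for $x\ge 1$, so $g'$ changes sign at most once and only from $+$ to $-$. Consequently $\min_{[0,k^\omega]}g$ is attained at an endpoint, and both endpoint values $g(0)=1-(\tfrac12)^x\ge 0$ and $g(k^\omega)=0$ are nonnegative, so $g\ge 0$ on $[0,k^\omega]$, which is (2). (One may also record $C>1$ here via superadditivity $(1+k^\omega)^x\ge 1+k^{\omega x}$, which re-confirms the positivity of the last coefficient.)

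The only genuinely delicate points are the exponent bookkeeping in part (1) — keeping the powers $x$ and $1-x$ straight when clearing denominators so as to land precisely on $(\tfrac12)^x(1+k^\omega)^{1-x}\ge 1$ — and, in part (2), justifying cleanly that $g'$ exhibits exactly this one-sided sign pattern rather than appealing to concavity of $g$, which would only be available for $1\le x\le 2$. Everything else rests on the elementary monotonicities of $t/(1+t)$ and $(1+\tfrac1t)^{x-1}$ together with superadditivity of the power function, so I do not expect any further obstruction.
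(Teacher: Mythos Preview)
Your proof is correct and shares the paper's core strategy: establish monotonicity in $t$ and evaluate at the anchor point $t=k^\omega$. The paper carries this out via the substitution $y=1/t$, studying $g(x,y)=(1+y)^x-(\tfrac{y}{2})^x$ and showing $\partial_y g\le 0$ by means of an auxiliary $h(x,y)=(1+\tfrac1y)^{x-1}-(\tfrac12)^x$, which it bounds through monotonicity in $x$; unwinding the substitution, the condition $h\le 0$ is exactly your key inequality $(\tfrac12)^x(1+s)^{1-x}\ge 1$ for $s\ge 1$, which you obtain more directly from $1+k^\omega\ge 2$ and $1-2x\ge 0$. For part~(2) the paper gives no details beyond ``in a similar manner,'' so your explicit unimodal argument (the sign of $g'$ changes at most once from $+$ to $-$, hence the minimum on $[0,k^\omega]$ is at an endpoint) actually supplies more than the paper does and cleanly avoids any restriction to $1\le x\le 2$.
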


\begin{proof} We prove these two inequalities 
in a similar manner.
Consider  $h(x,y)=(1+\frac{1}{y})^{x-1}-(\frac{1}{2})^x$ where $0\leq x\leq\frac{1}{2}$ and $0<y\leq\frac{1}{k^\omega}$ with real numbers $k\geq1$ and $\omega\geq1$. Then $\frac{\partial h}{\partial x}=(1+\frac{1}{y})^{x-1}ln(1+\frac{1}{y})-(\frac{1}{2})^xln\frac{1}{2}>0$ as $1+\frac{1}{y}\geq2$. So $h(x,y)$ is an increasing function of $x$ when $y$ is fixed, i.e, $h(x,y)\leq h(\frac{1}{2},y)=(1+\frac{1}{y})^{-\frac{1}{2}}-(\frac{1}{2})^{\frac{1}{2}}\leq0$ as $0<(1+\frac{1}{y})^{-1}\leq\frac{1}{2}$. Let $g(x,y)=(1+y)^x-(\frac{1}{2}y)^x$ with $0\leq x\leq\frac{1}{2}$ and $0<y\leq\frac{1}{k^\omega}$. We have $\frac{\partial g}{\partial y}=xy^{x-1}[(1+\frac{1}{y})^{x-1}-(\frac{1}{2})^x]\leq0$ and then $g(x,y)$ is a decreasing function of $y$ for fixed $x$. 
Thus, for $t\geq k^\omega$, one has $g(x,\frac{1}{t})\geq g(x, \frac1{k^\omega})$. Therefore $(1+t)^x\geq(\frac{1}{2})^x+\frac{(1+k^\omega)^x-(\frac{1}{2})^x}{k^{\omega x}}t^x$.
\end{proof}

\begin{lemma}\label{lemma:2}
For $0\leq x,y, x^2+y^2\leq1$, $0\leq\alpha\leq\frac{\eta}{2}$, $\eta\geq1$, $\omega\geq1$, and $k\geq1$,

(1)if $f_B^\eta(y)\geq k^\omega f_B^\eta(x)$, we have
\begin{eqnarray}
f_B^\alpha(\sqrt{x^2+y^2})\geq(\frac{1}{2})^{\frac{\alpha}{\eta}}f_B^\alpha(x)+\frac{(1+k^\omega)^{\frac{\alpha}{\eta}}-(\frac{1}{2})^{\frac{\alpha}{\eta}}}{k^{\frac{\omega\alpha}{\eta}}}f_B^\alpha(y).
\end{eqnarray}

(2)if $f_G^\eta(y)\geq k^\omega f_G^\eta(x)$, we have
\begin{eqnarray}
f_G^\alpha(\sqrt{x^2+y^2})\geq(\frac{1}{2})^{\frac{\alpha}{\eta}}f_G^\alpha(x)+\frac{(1+k^\omega)^{\frac{\alpha}{\eta}}-(\frac{1}{2})^{\frac{\alpha}{\eta}}}{k^{\frac{\omega\alpha}{\eta}}}f_G^\alpha(y).
\end{eqnarray}

\end{lemma}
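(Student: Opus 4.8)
The plan is to reduce both parts to Lemma~\ref{lemma:1} once a Pythagorean-type superadditivity for $f_B^\eta$ and $f_G^\eta$ is in hand. First I would record the inequalities
\[
f_B^\eta\bigl(\sqrt{x^2+y^2}\bigr)\ge f_B^\eta(x)+f_B^\eta(y),\qquad f_G^\eta\bigl(\sqrt{x^2+y^2}\bigr)\ge f_G^\eta(x)+f_G^\eta(y),
\]
valid for $\eta\ge1$ and $0\le x,y$ with $x^2+y^2\le1$. For $\eta=1$ this is the analytic core of the known monogamy of the Bures and geometric measures, and to keep things self-contained I would prove it by checking that the auxiliary maps $s\mapsto f_B(\sqrt s)$ and $s\mapsto f_G(\sqrt s)$ are convex on $[0,1]$ and vanish at $s=0$. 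Indeed $f_G(\sqrt s)=\frac{1-\sqrt{1-s}}{2}$ has derivative $\frac{1}{4\sqrt{1-s}}$, which is increasing in $s$; and $f_B(\sqrt s)=2-\sqrt2\,\sqrt{1+\sqrt{1-s}}$ has derivative $\bigl(2\sqrt2\,\sqrt{1-s}\,\sqrt{1+\sqrt{1-s}}\bigr)^{-1}$, whose denominator is a product of nonnegative decreasing functions of $s$, hence decreasing, so the derivative is increasing in $s$. A convex function vanishing at the origin is superadditive, so with $u=x^2$, $v=y^2$ we get $f_B(\sqrt{x^2+y^2})\ge f_B(x)+f_B(y)$ and likewise for $f_G$; raising to the power $\eta\ge1$ and using $(a+b)^\eta\ge a^\eta+b^\eta$ for $a,b\ge0$ yields the displayed inequalities.

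With this in place I would finish part~(1) as follows. Put $A=f_B^\eta(x)$, $B=f_B^\eta(y)$ and $p=\frac{\alpha}{\eta}$; since $0\le\alpha\le\frac{\eta}{2}$ we have $0\le p\le\frac12$. Assume first $A>0$ and set $t=B/A$. The hypothesis $f_B^\eta(y)\ge k^\omega f_B^\eta(x)$ gives $t\ge k^\omega$, and $k\ge1$, $\omega\ge1$ give $t\ge k^\omega\ge k\ge1$, so Lemma~\ref{lemma:1}(1) applies with exponent $p$ and argument $t$:
\[
(1+t)^p\ge\Bigl(\tfrac12\Bigr)^p+\frac{(1+k^\omega)^p-(\tfrac12)^p}{k^{\omega p}}\,t^p.
\]
Multiplying by $A^p$, using $A^pt^p=B^p$ and then the superadditivity from the first step,
\[
f_B^\alpha\bigl(\sqrt{x^2+y^2}\bigr)=\Bigl(f_B^\eta\bigl(\sqrt{x^2+y^2}\bigr)\Bigr)^{p}\ge(A+B)^p=A^p(1+t)^p\ge\Bigl(\tfrac12\Bigr)^pA^p+\frac{(1+k^\omega)^p-(\tfrac12)^p}{k^{\omega p}}\,B^p,
\]
which is the assertion since $A^p=f_B^\alpha(x)$ and $B^p=f_B^\alpha(y)$. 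The remaining case $A=0$ forces $x=0$ (as $f_B$ is increasing with $f_B(0)=0$), so the claimed inequality collapses to $f_B^\alpha(y)\ge\frac{(1+k^\omega)^p-(\tfrac12)^p}{k^{\omega p}}f_B^\alpha(y)$, and it suffices to verify $(1+z)^p\le z^p+(\tfrac12)^p$ for $z=k^\omega\ge1$ and $0\le p\le\tfrac12$; this follows from $(1+z)^p-z^p\le p\,z^{p-1}\le p\le(\tfrac12)^p$, the last inequality because $(\tfrac12)^p=e^{-p\ln2}\ge1-p\ln2\ge p$ on $[0,\tfrac12]$. Part~(2) is proved verbatim with $f_G$ in place of $f_B$.

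The only genuinely delicate point is the convexity of $s\mapsto f_B(\sqrt s)$ in the first step; once the expression is rewritten as $2-\sqrt2\,\sqrt{1+\sqrt{1-s}}$, the monotonicity of the derivative is a short computation, and everything afterward is routine substitution into Lemma~\ref{lemma:1}. If one prefers, that first step can instead be quoted directly from the literature on fidelity-based monogamy, since for $\eta=1$ it is exactly the superadditivity of $M_B$ and $M_G$ under orthogonal combination of concurrences, and the general $\eta\ge1$ case then follows from $(a+b)^\eta\ge a^\eta+b^\eta$.
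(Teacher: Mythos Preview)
Your proof is correct and follows essentially the same route as the paper: first invoke the superadditivity $f_B^\eta(\sqrt{x^2+y^2})\ge f_B^\eta(x)+f_B^\eta(y)$ (which the paper simply cites from [21], whereas you supply a self-contained convexity argument), then factor out $f_B^\alpha(x)$ and apply Lemma~\ref{lemma:1}(1) with $t=f_B^\eta(y)/f_B^\eta(x)$ and exponent $\alpha/\eta\le\tfrac12$. Your explicit treatment of the degenerate case $f_B^\eta(x)=0$ is a nice addition that the paper omits.
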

\begin{proof}
When $f_B^\eta(y)\geq k^\omega f_B^\eta(x)$, we have
\begin{eqnarray}
f_B^\alpha(\sqrt{x^2+y^2})&\geq&(f_B^\eta(x)+f_B^\eta(y))^\frac\alpha \eta\nonumber\\
&=&f_B^\alpha(x)(1+\frac{f_B^\eta(y)}{f_B^\eta(x)})^\frac\alpha \eta\nonumber\\
&\geq& f_B^\alpha(x)[(\frac{1}{2})^\frac\alpha \eta+\frac{(1+k^\omega)^\frac\alpha\eta-(\frac{1}{2})^\frac\alpha \eta}{k^{\frac{\omega\alpha} \eta}}(\frac{f_B^\eta(y)}{f_B^\eta(x)})^\frac\alpha \eta]\nonumber\\
&=&(\frac{1}{2})^{\frac{\alpha}{\eta}}f_B^\alpha(x)+\frac{(1+k^\omega)^{\frac{\alpha}{\eta}}-(\frac{1}{2})^{\frac{\alpha}{\eta}}}{k^{\frac{\omega\alpha}{\eta}}}f_B^\alpha(y),
\end{eqnarray}
where $0\leq\alpha\leq\frac{\eta}{2}$, $\eta\geq1$, $\omega\geq1$, and $k\geq1$. The first inequality is obtained by $f_B^\eta(\sqrt{x^2+y^2})\geq f_B^\eta(x)+f_B^\eta(y)$ for $0\leq x,y, x^2+y^2\leq1$ and $\eta\geq1$ [21] and the second one is due to (6) of Lemma 1.
\end{proof}

\begin{thm}\label{thm:1}
In tri-qubit quantum systems, assuming real numbers $k\geq1$, $\omega\geq1$, $0\leq\alpha\leq\frac{\eta}{2}$ and $\eta\geq1$, then one has that

(1) if $M_B^\eta(\rho_{A_1A_3})\geq k^\omega M_B^\eta(\rho_{A_1A_2})$, then the Bures measure of entanglement satisfies
\begin{eqnarray}
M_B^\alpha(\rho_{A_1|A_2A_3})\geq(\frac{1}{2})^{\frac{\alpha}{\eta}}M_B^\alpha(\rho_{A_1A_2})+\frac{(1+k^\omega)^{\frac{\alpha}{\eta}}-(\frac{1}{2})^{\frac{\alpha}{\eta}}}{k^{\frac{\omega\alpha}{\eta}}}M_B^\alpha(\rho_{A_1A_3}).
\end{eqnarray}

(2) if $M_B^\eta(\rho_{A_1A_2})\geq k^\omega M_B^\eta(\rho_{A_1A_3})$, then the Bures measure of entanglement satisfies
\begin{eqnarray}
M_B^\alpha(\rho_{A_1|A_2A_3})\geq(\frac{1}{2})^{\frac{\alpha}{\eta}}M_B^\alpha(\rho_{A_1A_3})+\frac{(1+k^\omega)^{\frac{\alpha}{\eta}}-(\frac{1}{2})^{\frac{\alpha}{\eta}}}{k^{\frac{\omega\alpha}{\eta}}}M_B^\alpha(\rho_{A_1A_2}).
\end{eqnarray}
\end{thm}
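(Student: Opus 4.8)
The plan is to reduce Theorem~\ref{thm:1} to Lemma~\ref{lemma:2} by exploiting the relation between the fidelity-based measures $M_B$ and the concurrence established in equations (4)--(5) and the subsequent remarks. First I would recall that for the tri-qubit state $\rho_{A_1A_2A_3}$ the monogamy relation of concurrence of Coffman--Kundu--Wootters, extended by Osborne--Verstraete, gives $C^2(\rho_{A_1|A_2A_3})\geq C^2(\rho_{A_1A_2})+C^2(\rho_{A_1A_3})$. Writing $x=C(\rho_{A_1A_2})$ and $y=C(\rho_{A_1A_3})$, this says $C(\rho_{A_1|A_2A_3})\geq\sqrt{x^2+y^2}$, and clearly $0\leq x,y$ with $x^2+y^2\leq C^2(\rho_{A_1|A_2A_3})\leq1$, so the hypotheses of Lemma~\ref{lemma:2} on the arguments are met.

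Next I would translate the measures into the function $f_B$. Since $\rho_{A_1A_2}$ and $\rho_{A_1A_3}$ are two-qubit states, $M_B(\rho_{A_1A_2})=f_B(x)$ and $M_B(\rho_{A_1A_3})=f_B(y)$ exactly by (4). For the left-hand term $\rho_{A_1|A_2A_3}$ is a $2\otimes 4$ state, so by the cited inequality for $2\otimes d$ states one has $M_B(\rho_{A_1|A_2A_3})\geq f_B(C(\rho_{A_1|A_2A_3}))$, and since $f_B$ is monotonically increasing on $[0,1]$ together with $C(\rho_{A_1|A_2A_3})\geq\sqrt{x^2+y^2}$ we get $M_B(\rho_{A_1|A_2A_3})\geq f_B(\sqrt{x^2+y^2})$. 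Raising to the power $\alpha\geq0$ preserves the inequality, so $M_B^\alpha(\rho_{A_1|A_2A_3})\geq f_B^\alpha(\sqrt{x^2+y^2})$.

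Then the hypothesis $M_B^\eta(\rho_{A_1A_3})\geq k^\omega M_B^\eta(\rho_{A_1A_2})$ becomes exactly $f_B^\eta(y)\geq k^\omega f_B^\eta(x)$, which is the case hypothesis of Lemma~\ref{lemma:2}(1). Applying that lemma gives
\begin{eqnarray}
f_B^\alpha(\sqrt{x^2+y^2})\geq(\tfrac{1}{2})^{\frac{\alpha}{\eta}}f_B^\alpha(x)+\frac{(1+k^\omega)^{\frac{\alpha}{\eta}}-(\tfrac{1}{2})^{\frac{\alpha}{\eta}}}{k^{\frac{\omega\alpha}{\eta}}}f_B^\alpha(y),\nonumber
\end{eqnarray}
and substituting back $f_B^\alpha(x)=M_B^\alpha(\rho_{A_1A_2})$, $f_B^\alpha(y)=M_B^\alpha(\rho_{A_1A_3})$ yields part (1). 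Part (2) follows by the same chain after swapping the roles of $\rho_{A_1A_2}$ and $\rho_{A_1A_3}$ (equivalently, of $x$ and $y$), using Lemma~\ref{lemma:2}(1) with $x$ and $y$ interchanged.

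The only genuinely delicate point, and the one I would be most careful about, is the step for the bipartite term: one must be sure that the chain $M_B(\rho_{A_1|A_2A_3})\geq f_B(C(\rho_{A_1|A_2A_3}))\geq f_B(\sqrt{x^2+y^2})$ is legitimate, i.e. that the CKW-type inequality for concurrence does hold at the level of the $2\otimes 4$ cut and that $f_B$'s monotonicity is being applied on the correct interval $[0,1]$ (which requires $x^2+y^2\leq1$, guaranteed since $C(\rho_{A_1|A_2A_3})\leq1$). Everything else is a direct substitution into Lemma~\ref{lemma:2}, so no new estimates are needed.
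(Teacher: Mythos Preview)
Your proposal is correct and follows essentially the same route as the paper's own proof: both start from the tri-qubit concurrence inequality $C^2(\rho_{A_1|A_2A_3})\geq C^2(\rho_{A_1A_2})+C^2(\rho_{A_1A_3})$, invoke $M_B(\rho_{A_1|A_2A_3})\geq f_B(C(\rho_{A_1|A_2A_3}))$ together with the monotonicity of $f_B$, apply Lemma~\ref{lemma:2}, and then convert back via $M_B=f_B\circ C$ on the two-qubit reductions. Your additional care in verifying $x^2+y^2\leq1$ so that Lemma~\ref{lemma:2} applies is a point the paper leaves implicit.
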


\begin{proof}
For an arbitrary tri-qubit state $\rho$ under bipartite partition $A_1|A_2A_3$, one has [26]:
\begin{eqnarray}
C^2(\rho_{A_1|A_2A_3})\geq C^2(\rho_{A_1A_2})+C^2(\rho_{A_1A_3}).
\end{eqnarray}
Suppose $M_B^\eta(\rho_{A_1A_3})\geq k^\omega M_B^\eta(\rho_{A_1A_2})$, $k\geq1$ and $\omega\geq1$, then
\begin{eqnarray}
M_B^{\alpha}(\rho_{A_1|A_2A_3})&\geq&f_B^\alpha(C(\rho_{A_1|A_2A_3}))\nonumber\\
&\geq&f_B^\alpha(\sqrt{C^2(\rho_{A_1A_2})+C^2(\rho_{A_1A_3})})\nonumber\\
&\geq&(\frac{1}{2})^{\frac{\alpha}{\eta}}f_B^{\alpha}(C(\rho_{A_1A_2}))+\frac{(1+k^\omega)^{\frac{\alpha}{\eta}}-(\frac{1}{2})^{\frac{\alpha}{\eta}}}{k^{\frac{\omega\alpha}{\eta}}}f_B^{\alpha}(C(\rho_{A_1A_3})) \nonumber\\
&=&(\frac{1}{2})^{\frac{\alpha}{\eta}}M_B^{\alpha}(\rho_{A_1A_2})+\frac{(1+k^\omega)^{\frac{\alpha}{\eta}}-(\frac{1}{2})^{\frac{\alpha}{\eta}}}{k^{\frac{\omega\alpha}{\eta}}}M_B^{\alpha}(\rho_{A_1A_3}),
\end{eqnarray}
where $0\leq\alpha\leq\frac{\eta}{2}$, $\eta\geq1$, the first inequality is obtained by $M_B(\rho_{A_1|A_2A_3})\geq f_B(C(\rho_{A_1|A_2A_3}))$ [22], the second one is due to inequality (13) and the fact that $f_B(x)$ is a monotonically increasing function, and the last inequality is due to Lemma 2. The equality holds since $M_B(\rho)=f_B(C(\rho))$ for 2-qubit states [22]. Similar proof gives inequality (12) by using Lemma 2.
\end{proof}
{\bf Remark 1.} We have derived the monogamy relations for the Bures measure of entanglement and also for the geometric measure of entanglement by the same argument.

In the following, let $M(\rho_{A_1A_i})=M_{A_1A_i}$, $C(\rho_{A_1A_i})=C_{A_1A_i}$, $M(\rho_{A_1|A_{j+1}\cdots A_n})=M_{A_1|A_{j+1}\cdots A_n}$, $C(\rho_{A_1|A_{j+1}\cdots A_n})=C_{A_1|A_{j+1}\cdots A_n}$ where $i=2,\cdots,n-1$ and $j=1,\cdots,n-1$ and simply note the Bures measure of entanglement ($M_B$) or the geometric
measure of entanglement ($M_G$) by $M$. We now generalize the monogamy inequalities of the $\alpha$th ($0\leq\alpha\leq\frac{\eta}{2}$, $\eta\geq1$) power of the Bures measure of entanglement for $n$-qubit quantum states $\rho$ under bipartite partition $A_1|A_2\cdots A_n$.
\begin{thm}\label{thm:3}
In multi-qubit quantum systems, assuming real numbers $k\geq1$, $0\leq\alpha\leq\frac{\eta}{2}$, $\eta\geq1$, $\omega\geq1$, and $\mu=\frac{(1+k^\omega)^{\frac{\alpha}{\eta}}-(\frac{1}{2})^{\frac{\alpha}{\eta}}}{k^{\frac{\omega\alpha}{\eta}}}$, we have that

(1) if $k^\omega M_{A_1A_i}^\eta\leq M_{A_1|A_{i+1}\cdots A_n}^\eta$ for $i=2,\cdots,m$ and $M_{A_1A_j}^\eta\geq k^\omega M_{A_1|A_{j+1}\cdots A_n}^\eta$ for $j=m+1,\cdots,n-1$, $\forall$ $2\leq m\leq n-2$, $n\geq4$, then we have
\begin{eqnarray}
M_{A_1|A_2A_3\cdots A_n}^\alpha&\geq&(\frac{1}{2})^{\frac{\alpha}{\eta}}(M_{A_1A_2}^\alpha+\mu M_{A_1A_3}^\alpha+\cdots+\mu^{m-2}E_{A_1A_m}^\alpha)\nonumber\\
&+&\mu^m[M_{A_1A_{m+1}}^\alpha+(\frac{1}{2})^{\frac{\alpha}{\eta}}M_{A_1A_{m+2}}^\alpha+\cdots+(\frac{1}{2})^{\frac{(n-m-2)\alpha}{\eta}}M_{A_1A_{n-1}}^\alpha]\nonumber\\
&+&\mu^{m-1}(\frac{1}{2})^{\frac{(n-m-1)\alpha}{\eta}}M_{A_1A_n}^\alpha.
\end{eqnarray}

(2) if $k^\omega M_{A_1A_i}^\eta\leq M_{A_1|A_{i+1}\cdots A_n}^\eta$ for $i=2,\cdots,n-1$ and $n\geq3$, then we have that
\begin{eqnarray}
M_{A_1|A_2A_3\cdots A_n}^\alpha\geq(\frac{1}{2})^{\frac{\alpha}{\eta}}(M_{A_1A_2}^\alpha+\mu M_{A_1A_3}^\alpha+\cdots+\mu^{n-3}M_{A_1A_{n-1}}^\alpha)+\mu^{n-2}M_{A_1A_n}^\alpha.
\end{eqnarray}

(3) if $M_{A_1A_i}^\eta\geq k^\omega M_{A_1|A_{i+1}\cdots A_n}^\eta$ for $i=2,\cdots,n-1$ and $n\geq3$, then we have that
\begin{eqnarray}
M_{A_1|A_2A_3\cdots A_n}^\alpha\geq \mu(M_{A_1A_2}^\alpha+(\frac{1}{2})^{\frac{\alpha}{\eta}}M_{A_1A_3}^\alpha+\cdots+(\frac{1}{2})^{\frac{(n-3)\alpha}{\eta}}M_{A_1A_{n-1}}^\alpha)+(\frac{1}{2})^{\frac{(n-2)\alpha}{\eta}}M_{A_1A_n}^\alpha.
\end{eqnarray}
\end{thm}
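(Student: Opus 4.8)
I would prove the statement for $M=M_B$, the case $M=M_G$ being entirely analogous with $f_B$ replaced by $f_G$. The strategy is to reduce the assertion to $n-2$ successive applications of Lemma~\ref{lemma:2}, peeling off one qubit of the block $A_2A_3\cdots A_n$ at a time; equivalently one runs an induction on $n$ whose base case $n=3$ is exactly Theorem~\ref{thm:1} (for item~(1) the base case $n=4$ is obtained by combining one step of each type below). As in the proof of Theorem~\ref{thm:1}, I would start from the $n$-qubit concurrence monogamy relation $C^2(\rho_{A_1|A_2\cdots A_n})\geq\sum_{i=2}^{n}C^2(\rho_{A_1A_i})$ [4], combine it with the $2\otimes d$ bound $M(\rho_{A_1|A_2\cdots A_n})\geq f_B(C(\rho_{A_1|A_2\cdots A_n}))$ [22] and the monotonicity of $f_B$ to obtain $M_{A_1|A_2\cdots A_n}^{\alpha}\geq f_B^{\alpha}\big(\sqrt{\sum_{i=2}^{n}C_{A_1A_i}^2}\big)$, and then write the radicand as $C_{A_1A_2}^2+\big(\sum_{i=3}^{n}C_{A_1A_i}^2\big)$ and apply Lemma~\ref{lemma:2} with $x=C_{A_1A_2}$ and $y=\sqrt{\sum_{i=3}^{n}C_{A_1A_i}^2}$. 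Iterating this on $\sqrt{\sum_{i=3}^{n}C_{A_1A_i}^2}=\sqrt{C_{A_1A_3}^2+(\sum_{i=4}^{n}C_{A_1A_i}^2)}$, and so on, eventually leaves only the single two-qubit term $f_B^{\alpha}(C_{A_1A_n})=M_{A_1A_n}^{\alpha}$, using $M=f_B(C)$ for two-qubit states.

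\textbf{Choice of case and bookkeeping.} At the step that peels off $A_i$, one uses Lemma~\ref{lemma:2}(1) when $k^{\omega}M_{A_1A_i}^{\eta}\leq M_{A_1|A_{i+1}\cdots A_n}^{\eta}$ — this multiplies the new $M_{A_1A_i}^{\alpha}$ term by $(\tfrac12)^{\alpha/\eta}$ and the surviving tail by $\mu$ — and Lemma~\ref{lemma:2}(2) when $M_{A_1A_i}^{\eta}\geq k^{\omega}M_{A_1|A_{i+1}\cdots A_n}^{\eta}$ — multiplying the new $M_{A_1A_i}^{\alpha}$ term by $\mu$ and the tail by $(\tfrac12)^{\alpha/\eta}$. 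Under the hypotheses of item~(2) of the theorem every step is of the first type, so $M_{A_1A_i}^{\alpha}$ ends up with coefficient $(\tfrac12)^{\alpha/\eta}\mu^{\,i-2}$ for $2\leq i\leq n-1$ and $M_{A_1A_n}^{\alpha}$ with coefficient $\mu^{\,n-2}$; this is the second inequality. Under the hypotheses of item~(3) every step is of the second type, which gives the third inequality with the roles of $(\tfrac12)^{\alpha/\eta}$ and $\mu$ interchanged. Under the hypotheses of item~(1) the first $m-1$ steps ($i=2,\dots,m$) are of the first type and the last $n-m-1$ steps ($i=m+1,\dots,n-1$) of the second; carrying the accumulated factor $\mu^{\,m-1}$ from the first phase through the second reproduces the first inequality.

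\textbf{Main obstacle.} The only non-routine point is to check that the case of Lemma~\ref{lemma:2} invoked at each step is genuinely applicable, i.e. that the needed comparison between $f_B^{\eta}$ of the qubit term being removed and $f_B^{\eta}$ of the radical formed by the remaining qubits really follows from the stated comparison between $M_{A_1A_i}$ and the block measure $M_{A_1|A_{i+1}\cdots A_n}$. For steps of the second type this is immediate: since $\sqrt{\sum_{l\geq i+1}C_{A_1A_l}^2}\leq C_{A_1|A_{i+1}\cdots A_n}$, monotonicity of $f_B$ together with $M\geq f_B(C)$ gives $f_B^{\eta}\big(\sqrt{\sum_{l\geq i+1}C_{A_1A_l}^2}\big)\leq M_{A_1|A_{i+1}\cdots A_n}^{\eta}\leq\tfrac{1}{k^{\omega}}M_{A_1A_i}^{\eta}$, which is exactly the hypothesis of Lemma~\ref{lemma:2}(2). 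For steps of the first type one instead uses the superadditivity $f_B^{\eta}(\sqrt{x^2+y^2})\geq f_B^{\eta}(x)+f_B^{\eta}(y)$ [21] to bound $f_B^{\eta}\big(\sqrt{\sum_{l\geq i+1}C_{A_1A_l}^2}\big)$ from below, and to make this compatible with the hypothesis $k^{\omega}M_{A_1A_i}^{\eta}\leq M_{A_1|A_{i+1}\cdots A_n}^{\eta}$ one must propagate the appropriate auxiliary inequality along the induction; setting up this propagation carefully is the crux, everything else being the coefficient bookkeeping described above.
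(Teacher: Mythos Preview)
Your iterative strategy---peel off one qubit at a time and apply Lemma~\ref{lemma:2} at each step, with the two orderings of Lemma~\ref{lemma:2} producing the factors $(\tfrac12)^{\alpha/\eta}$ and $\mu$ on the peeled term and the tail respectively---is exactly the paper's. The one substantive difference is the choice of \emph{tail}: you collapse everything at the outset via the Osborne--Verstraete inequality and carry $\sqrt{\sum_{l>i}C_{A_1A_l}^2}$, whereas the paper carries the block concurrence $C_{A_1|A_{i+1}\cdots A_n}$ and uses the $2\otimes2\otimes2^{\,n-2}$ monogamy $C^2_{A_1|A_i\cdots A_n}\geq C^2_{A_1A_i}+C^2_{A_1|A_{i+1}\cdots A_n}$ from~[26] afresh at every step. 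With the paper's tail the hypothesis needed for Lemma~\ref{lemma:2} at step $i$ reads $f^\eta(C_{A_1|A_{i+1}\cdots A_n})\geq k^\omega f^\eta(C_{A_1A_i})=k^\omega M_{A_1A_i}^\eta$, which is syntactically much closer to the theorem's assumption $M_{A_1|A_{i+1}\cdots A_n}^\eta\geq k^\omega M_{A_1A_i}^\eta$; the paper simply invokes Lemma~\ref{lemma:2} at that point without further comment, so the ``propagation of auxiliary inequalities'' you set up is never needed in its presentation.

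Your ``Main obstacle'' paragraph is perceptive: translating the hypothesis on $M_{A_1|A_{i+1}\cdots A_n}$ into one on $f^\eta$ of the tail is indeed delicate for the type-1 steps, since $M\geq f(C)$ for $2\otimes d$ mixed states points the wrong way. The paper's proof does not explicitly resolve this subtlety either---it treats the $M$-hypothesis and the $f(C)$-hypothesis as interchangeable when invoking Lemma~\ref{lemma:2}. So your proposal is no less complete than the paper's own argument; if anything, you have flagged a point the paper passes over. If you want to match the paper exactly, replace your sum-of-squares tail by the block concurrence $C_{A_1|A_{i+1}\cdots A_n}$ and cite~[26] rather than~[4] at each step; the bookkeeping you describe is then identical to the paper's.
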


\begin{proof}

For an $n$-qubit quantum state $\rho$ under bipartite partition $A_1|A_2A_3\cdots A_n$, if $k^\omega M_{A_1A_i}^\eta\leq M_{A_1|A_{i+1}\cdots A_n}^\eta$ for $i=2,\cdots,m$, we have
\begin{eqnarray}
M_{A_1|A_2A_3\cdots A_n}^\alpha&\geq& f^\alpha(C_{A_1|A_2A_3\cdots A_n})\nonumber\\
&\geq&f^\alpha(\sqrt{C_{A_1A_2}^2+C_{A_1|A_3\cdots A_n}^2}) \nonumber\\
&\geq&(\frac{1}{2})^{\frac{\alpha}{\eta}}f^\alpha(C_{A_1A_2})+\mu f^\alpha(C_{A_1|A_3\cdots A_n}) \nonumber\\
&\geq&\cdots\\
&\geq&(\frac{1}{2})^{\frac{\alpha}{\eta}}(f^\alpha(C_{A_1A_2})+\mu f^\alpha(C_{A_1A_3})+\cdots+\mu^{m-2}f^\alpha(C_{A_1A_m}))\nonumber\\
&+&\mu^{m-1}f^\alpha(C_{A_1|A_{m+1}\cdots A_n})\nonumber\\
&=&(\frac{1}{2})^{\frac{\alpha}{\eta}}(M_{A_1A_2}^\alpha+\mu M_{A_1A_3}^\alpha+\cdots+\mu^{m-2}M_{A_1A_m}^\alpha)+\mu^{m-1}f^\alpha(C_{A_1|A_{m+1}\cdots A_n}),\nonumber
\end{eqnarray}
where the first inequality follows from $M_{A_1|A_2A_3\cdots A_n}\geq f(C_{A_1|A_2A_3\cdots A_n})$ [22], the second one is due to $C^2_{A_1|A_2A_3}\geq C^2_{A_1A_2}+C^2_{A_1A_3}$ for $2\otimes2\otimes2^{n-2}$ tripartite state [26], and $f(x)$ being a monotonically increasing function. Using Lemma 2, we get the third inequality. Other inequalities are consequences of Lemma 2
and the last equality holds due to $M(\rho)=f(C(\rho))$ for 2-qubit states.

For $M_{A_1A_j}^\eta\geq k^\omega M_{A_1|A_{j+1}\cdots A_n}^\eta$ for $j=m+1,\cdots,n-1$, similar argument gives the following inequality by using Lemma 2:
\begin{eqnarray}
f^\alpha(C_{A_1|A_{m+1}+\cdots+A_n})&\geq& \mu f^\alpha(C_{A_1A_{m+1}})+(\frac{1}{2})^{\frac{\alpha}{\eta}}f^\alpha(C_{A_1|A_{m+2}\cdots A_n})\nonumber\\
&\geq&\cdots \nonumber\\
&\geq&\mu[M_{A_1A_{m+1}}^\alpha+(\frac{1}{2})^{\frac{\alpha}{\eta}}M_{A_1A_{m+2}}^\alpha+\cdots+(\frac{1}{2})^{\frac{(n-m-2)\alpha}{\eta}}M_{A_1A_{n-1}}^\alpha]\nonumber\\
&+&(\frac{1}{2})^{\frac{(n-m-1)\alpha}{\eta}}M_{A_1A_n}^\alpha.
\end{eqnarray}
Combining (18) and (19), one obtains (15). If all $k^\omega M_{A_1A_i}^\eta\leq M_{A_1|A_{i+1}\cdots A_n}^\eta$ for $i=2,\cdots,n-1$ or $M_{A_1A_i}^\eta\geq k^\omega M_{A_1|A_{i+1}\cdots A_n}^\eta$ for $i=2,\cdots,n-1$, we have the inequalities (16) and (17).
\end{proof}

{\bf Remark 2.} We use the Bures measure of entanglement as an example to compare our result with those in [22, 27, 28]. In tripartite quantum systems, when $M_B^\eta(\rho_{A_1A_3})\geq k^\omega M_B^\eta(\rho_{A_1A_2})\geq k M_B^\eta(\rho_{A_1A_2})$ for $k\geq1$, $\omega\geq1$, $0\leq\alpha\leq\frac{\eta}{2}$ and $\eta\geq1$,  Theorem 1 says that the $\alpha$th power of the  Bures measure of entanglement satisfies $M_B^\alpha(\rho_{A_1|A_2A_3})\geq(\frac{1}{2})^{\frac{\alpha}{\eta}}M_B^\alpha(\rho_{A_1A_2})+\frac{(1+k^\omega)^{\frac{\alpha}{\eta}}-(\frac{1}{2})^{\frac{\alpha}{\eta}}}{k^{\frac{\omega\alpha}{\eta}}}M_B^\alpha(\rho_{A_1A_3})$ denoted as $m$. On the other hand, the lower bounds of $M_B^\alpha(\rho_{A_1|A_2A_3})$ are $M_B^\alpha(\rho_{A_1A_2})+M_B^\alpha(\rho_{A_1A_3})\doteqdot m_1$ by [22], $(\frac{1}{2})^{\frac{\alpha}{\eta}}M_B^\alpha(\rho_{A_1A_2})+\frac{(1+k)^{\frac{\alpha}{\eta}}-(\frac{1}{2})^{\frac{\alpha}{\eta}}}{k^{\frac{\alpha}{\eta}}}M_B^\alpha(\rho_{A_1A_3})\doteqdot m_2$ by [28], and $M_B^\alpha(\rho_{A_1A_2})+\frac{(1+k^\omega)^{\frac{\alpha}{\eta}}-1}{k^{\frac{\omega\alpha}{\eta}}}M_B^\alpha(\rho_{A_1A_3})\doteqdot m_3$ by [27] respectively.
Let $\Delta m=m-m_i$, $i=1, 2, 3$, in the following examples we will see that $\Delta m\geq0$, so our results are tighter than those in [22, 27, 28] for $0\leq\alpha\leq\frac{\eta}{2}$ and $\eta\geq1$.





\textit{\textbf{Example 1.}} Let us consider the 3-qubit quantum state of generalized Schmidt decomposition $|\varphi\rangle$,
\begin{eqnarray}
|\varphi\rangle=\lambda_0|000\rangle+\lambda_1e^{i\theta}|100\rangle+\lambda_2|101\rangle+\lambda_3|110\rangle+\lambda_4|111\rangle,
\end{eqnarray}
where $0\leq\theta\leq\pi$, $\lambda_i\geq0$, $i=0,\cdots,4$ and $\sum_{i=0}^4\lambda_i^2=1$.
One computes that, one has $C(|\varphi\rangle_{A_1|A_2A_3})=2\lambda_0\sqrt{\lambda_2^2+\lambda_3^2+\lambda_4^2}$, $C(|\varphi\rangle_{A_1A_2})=2\lambda_0\lambda_2$, and $C(|\varphi\rangle_{A_1A_3})=2\lambda_0\lambda_3$. Let $\lambda_0=\lambda_3=\frac{\sqrt{2}}{3}$, $\lambda_2=\frac{\sqrt{5}}{3}$, $\lambda_1=\lambda_4=0$, $k=2$, and $\omega=1.5$,  then we have $M_B(|\varphi\rangle_{A_1|A_2A_3})\approx0.23617$, $M_B(|\varphi\rangle_{A_1A_2})\approx0.14989$, $M_B(|\varphi\rangle_{A_1A_3})\approx0.05279$. Therefore, $M_B^x(|\varphi\rangle_{A_1|A_2A_3})=0.23617^x$. By Theorem 1, the lower bound of $M_B^x(|\varphi\rangle_{A_1|A_2A_3})$ is $z_1=(\frac{1}{2})^{\frac{x}{y}}0.05279^x+\frac{3.82842^{\frac{x}{y}}-(\frac{1}{2})^{\frac{x}{y}}}{2.82842^{\frac{x}{y}}}0.14989^x$. By Theorem 1 in [22], the lower bound of $M_B^x(|\varphi\rangle_{A_1|A_2A_3})$ is $z_2=0.05279^x+0.14989^x$. Assuming $0\leq x\leq\frac{1}{2}$ and $1\leq y\leq10$, Fig. 1 verifies that our result is tighter than that of [22].

\begin{figure}[!htb]
\centerline{\includegraphics[width=0.6\textwidth]{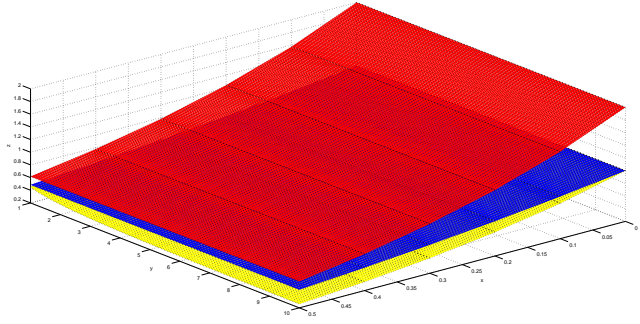}}
\renewcommand{\figurename}{Fig.}
\caption{
The red surface represents the Bures measure of entanglement of the state $|\varphi\rangle_{A_1|A_2A_3}$. The lower bound in [22] is shown by the yellow surface and the blue surface is our result in Theorem 1.}
\end{figure}

\textit{\textbf{Example 2.}} Consider the 3-qubit generalized W-class state $\rho=|\psi\rangle_{A_1A_2A_3}\langle\psi|$,
\begin{eqnarray}
|\psi\rangle_{A_1A_2A_3}=\frac{1}{\sqrt{6}}|100\rangle+\frac{1}{\sqrt{6}}|010\rangle+\frac{2}{\sqrt{6}}|001\rangle.
\end{eqnarray}
By the definition of concurrence, we have $C(\rho_{A_1|A_2A_3})=\frac{\sqrt{5}}{3}$, $C(\rho_{A_1A_2})=\frac{1}{3}$, and $C(\rho_{A_1A_3})=\frac{2}{3}$. Thus $M_B(\rho_{A_1|A_2A_3})=2-2\sqrt\frac56\approx0.17426$, $M_B(\rho_{A_1A_2})=2-2\sqrt\frac{3+2\sqrt2}6\approx0.02880$, and $M_B(\rho_{A_1A_3})=2-2\sqrt\frac{3+2\sqrt5}6\approx0.13166$. Let $k=2$, $\omega=2$, and $\eta=2$, then by Theorem 1, the lower bound of $M_B^x(\rho_{A_1|A_2A_3})$ is $y_1=(\frac{1}{2})^{\frac{x}{2}}0.02880^x+\frac{(1+4)^{\frac{x}{2}}-(\frac{1}{2})^{\frac{x}{2}}}{4^{\frac{x}{2}}}0.13166^x$. Using the method in [27] and [28], we have that the lower bound of $M_B^x(\rho_{A_1|A_2A_3})$ is $y_2=0.02880^x+\frac{(1+4)^{\frac{x}{2}}-1}{4^{\frac{x}{2}}}0.13166^x$ and $y_3=(\frac{1}{2})^{\frac{x}{2}}0.02880^x+\frac{(1+2)^{\frac{x}{2}}-(\frac{1}{2})^{\frac{x}{2}}}{2^{\frac{x}{2}}}0.13166^x$ respectively. Fig. 2 depicts the value of $y$ for $0\leq x\leq\frac{1}{2}$, which shows that
Theorem 1 supplies a better estimation of the Bures measure of entanglement than those of [27] and [28].

\begin{figure}[!htb]
\centerline{\includegraphics[width=0.6\textwidth]{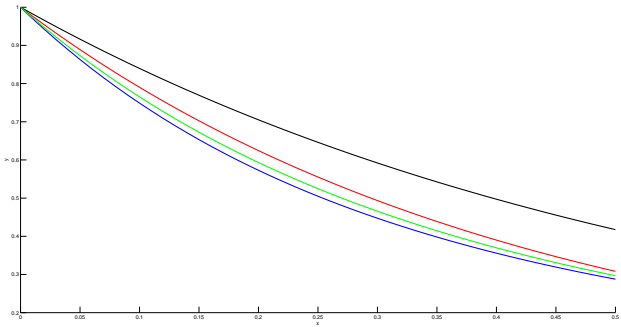}}
\renewcommand{\figurename}{Fig.}
\caption{
The black line represents the Bures measure of entanglement of the state $|\psi\rangle$. The red, green and blue line represent the lower bound $y_1$, $y_2$, and $y_3$  respectively.}
\end{figure}

\section{Monogamy relations of concurrence in higher-dimensional quantum systems}
In the above section, we have given the monogamy inequalities of the Bures measure of entanglement and the geometric measure of entanglement, both of which can be expressed as functions of concurrence for qubit quantum states. In the following, we will present the monogamy relations of concurrence for qudit quantum states and present a method for higher-dimensional monogamy relations. We first introduce the following definition.

Let $H_{A_1}$ and $H_{A_2}$ be $d_{A_1}$- and $d_{A_2}$-dimensional Hilbert spaces. The concurrence of a bipartite quantum pure state $|\varphi\rangle_{A_1A_2}\in H_{A_1}\otimes\ H_{A_2}$ is defined by [29],
\begin{eqnarray}
C(|\varphi\rangle_{A_1A_2})=\sqrt{2[1-tr(\rho_{A_1}^2)]},
\end{eqnarray}
where $\rho_{A_1}$ is the reduced density matrix of $\rho=|\varphi\rangle_{A_1A_2}\langle\varphi|$, i.e, $\rho_{A_1}=tr_{A_2}(\rho)$. For a mixed bipartite quantum state $\rho_{A_1A_2}=\sum_ip_i|\varphi_i\rangle_{A_1A_2}\langle\varphi_i|\in H_{A_1}\otimes H_{A_2}$, the concurrence is given by the convex roof
\begin{eqnarray}
C(\rho_{A_1A_2})=\min_{\{p_i,|\varphi_i\rangle\}}\sum_ip_iC(|\varphi_i\rangle_{A_1A_2}),
\end{eqnarray}
where the minimum is taken over all possible convex partitions of $\rho_{A_1A_2}$ into pure state ensembles $\{p_i,|\varphi_i\rangle\}$, $0\leq p_i\leq1$ and $\sum_ip_i=1$.

Now consider the concurrence for a tripartite quantum state under bipartite partition $A_1|A_2A_3$. For a pure tripartite quantum state $|\varphi\rangle_{A_1|A_2A_3}^{d_1\otimes d_2\otimes d_3}\in\mathcal{H}_{A_1}^{d_1}\otimes\mathcal{H}_{A_2}^{d_2}\otimes\mathcal{H}_{A_3}^{d_3}$, it has the form as follows:
\begin{eqnarray}
|\varphi\rangle_{A_1|A_2A_3}^{d_1\otimes d_2\otimes d_3}=\sum_{a=1}^{d_1}\sum_{b=1}^{d_2}\sum_{c=1}^{d_3}h_{a|bc}|abc\rangle,
\end{eqnarray}
where $h_{a|bc}\in$ $\mathbb{C}$, $\sum_{abc}h_{a|bc}h_{a|bc}^\ast=1$. From the definition of concurrence in (22), the concurrence of $|\varphi\rangle_{A_1|A_2A_3}^{d_1\otimes d_2\otimes d_3}$ is given by:
\begin{eqnarray}
C^2(|\varphi\rangle_{A_1|A_2A_3}^{d_1\otimes d_2\otimes d_3})=\sum_{a,e=1}^{d_1}\sum_{b,f=1}^{d_2}\sum_{c,j=1}^{d_3}|h_{a|bc}h_{e|fj}-h_{e|bc}h_{a|fj}|^2.
\end{eqnarray}

Next we analyse a pure substate of $|\varphi\rangle_{A_1|A_2A_3}^{d_1\otimes d_2\otimes d_3}$, i.e, $|\varphi\rangle_{A_1|A_2A_3}^{2\otimes 2\otimes 2}=\sum_{a\in\{a_1,a_2\}}\sum_{b\in\{b_1,b_2\}}$$\sum_{c\in\{c_1,c_2\}}$\\$h_{a|bc}|abc\rangle$, where $a_1\neq a_2\in\{1,2, \cdots, d_1\}$,
$b_1\neq b_2\in\{1,2, \cdots, d_2\}$ and
$c_1\neq c_2\in\{1,2, \cdots, d_3\}$. There are $d_1\choose 2$$d_2\choose 2$$d_3\choose 2$ different substates, where $d_i\choose 2$, $i=1, 2, 3$, is the binomial coefficient, and we simply use $|\varphi\rangle_{A_1|A_2A_3}^{2\otimes2\otimes2}$ to denote one of the substates.
It follows from (25) that
\begin{eqnarray}
C^2(|\varphi\rangle_{A_1|A_2A_3}^{d_1\otimes d_2\otimes d_3})\geq\sum\frac{1}{(d_1-1)(d_2-1)(d_3-1)}C^2(|\varphi\rangle_{A_1|A_2A_3}^{2\otimes2\otimes2}),
\end{eqnarray}
where $\sum$ stands for summing over all possible pure sub-states $|\varphi\rangle_{A_1|A_2A_3}^{2\otimes2\otimes2}$.
For a mixed state $\rho_{A_1|A_2A_3}^{d_1\otimes d_2\otimes d_3}$, its substate $\rho_{A_1|A_2A_3}^{2\otimes2\otimes2}$ has the following form,
\begin{eqnarray}
\rho_{A_1|A_2A_3}^{2\otimes2\otimes2}=
\left[ \begin{array}{cccccccc}
           \rho_{a_1|b_1c_1},_{a_1|b_1c_1}&\rho_{a_1|b_1c_1},_{a_1|b_1c_2}&\cdots&\rho_{a_1|b_1c_1},_{a_2|b_2c_1}&\rho_{a_1|b_1c_1},_{a_2|b_2c_2}  \\
           \rho_{a_1|b_1c_2},_{a_1|b_1c_1}&\rho_{a_1|b_1c_2},_{a_1|b_1c_2}&\cdots&\rho_{a_1|b_1c_2},_{a_2|b_2c_1}&\rho_{a_1|b_1c_2},_{a_2|b_2c_2}  \\
           \rho_{a_1|b_2c_1},_{a_1|b_1c_1}&\rho_{a_1|b_2c_1},_{a_1|b_1c_2}&\cdots&\rho_{a_1|b_2c_1},_{a_2|b_2c_1}&\rho_{a_1|b_2c_1},_{a_2|b_2c_2}  \\
           \vdots&\vdots&\vdots&\vdots&\vdots   \\
           \rho_{a_2|b_1c_2},_{a_1|b_1c_1}&\rho_{a_2|b_1c_2},_{a_1|b_1c_2}&\cdots&\rho_{a_2|b_1c_2},_{a_2|b_2c_1}&\rho_{a_2|b_1c_2},_{a_2|b_2c_2}  \\
           \rho_{a_2|b_2c_1},_{a_1|b_1c_1}&\rho_{a_2|b_2c_1},_{a_1|b_1c_2}&\cdots&\rho_{a_2|b_2c_1},_{a_2|b_2c_1}&\rho_{a_2|b_2c_1},_{a_2|b_2c_2}  \\
           \rho_{a_2|b_2c_2},_{a_1|b_1c_1}&\rho_{a_2|b_2c_2},_{a_1|b_1c_2}&\cdots&\rho_{a_2|b_2c_2},_{a_2|b_2c_1}&\rho_{a_2|b_2c_2},_{a_2|b_2c_2}  \\
           \end{array}
      \right ],
\label{A}
\end{eqnarray}
which is an unnormalized tripartite mixed state.
\begin{lemma}
In tripartite quantum systems $\mathcal{H}^{d_1}_{A_1} \otimes\mathcal{H}^{d_2}_{A_2}\otimes\mathcal{H}^{d_3}_{A_3}$, the concurrence of mixed state $\rho_{A_1|A_2A_3}^{d_1\otimes d_2\otimes d_3}$ satisfies:
\begin{eqnarray}
C^2(\rho_{A_1|A_2A_3}^{d_1\otimes d_2\otimes d_3})\geq\sum\frac{1}{(d_1-1)(d_2-1)(d_3-1)}C^2(\rho_{A_1|A_2A_3}^{2\otimes2\otimes2}),
\end{eqnarray}
where $\sum$ stands for summing over all possible mixed substates $\rho_{A_1|A_2A_3}^{2\otimes2\otimes2}$.
\end{lemma}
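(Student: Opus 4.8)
The plan is to lift the pure-state inequality (26) to mixed states by way of the convex-roof definition (23), letting Minkowski's inequality carry the $\ell^2$-type sum over substates through the ensemble average. The first point to secure is that (26) is really a relation among the coefficients $h_{a|bc}$: after expanding $C^2$ through (25), each squared minor $|h_{a|bc}h_{e|fj}-h_{e|bc}h_{a|fj}|^2$ is nonzero only when $a\neq e$, and every such term is counted in at most $(d_1-1)(d_2-1)(d_3-1)$ of the $2\otimes2\otimes2$ coordinate blocks, which is exactly what yields the prefactor. Consequently (26) holds \emph{verbatim for unnormalized pure states} --- rescaling a vector multiplies both sides by the same power of its norm --- and this is what lets it be applied to the generically unnormalized projections of a pure state onto $2\otimes2\otimes2$ coordinate subspaces.

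Write $D=(d_1-1)(d_2-1)(d_3-1)$, and for each of the $\binom{d_1}{2}\binom{d_2}{2}\binom{d_3}{2}$ admissible index choices (label them $s$) let $P_s$ be the orthogonal projection onto the corresponding $2\otimes2\otimes2$ subspace, so that the substate (27) is $\rho_{A_1|A_2A_3}^{2\otimes2\otimes2}=P_s\,\rho_{A_1|A_2A_3}^{d_1\otimes d_2\otimes d_3}\,P_s$; all concurrences below refer to the bipartition $A_1|A_2A_3$. First I would fix an optimal pure-state decomposition $\rho_{A_1|A_2A_3}^{d_1\otimes d_2\otimes d_3}=\sum_i p_i|\varphi_i\rangle\langle\varphi_i|$ attaining the convex roof (23), i.e. $C(\rho_{A_1|A_2A_3}^{d_1\otimes d_2\otimes d_3})=\sum_i p_i C(|\varphi_i\rangle)$. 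Projecting gives $\rho_{A_1|A_2A_3}^{2\otimes2\otimes2}=\sum_i p_i\,(P_s|\varphi_i\rangle)(P_s|\varphi_i\rangle)^{\dagger}$, a legitimate (unnormalized) pure-state ensemble of that substate, so (23), read with the homogeneous extension of $C$ to unnormalized vectors, yields $C(\rho_{A_1|A_2A_3}^{2\otimes2\otimes2})\leq\sum_i p_i C(P_s|\varphi_i\rangle)$ for every $s$.

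The core of the argument is then the chain
\begin{eqnarray}
C(\rho_{A_1|A_2A_3}^{d_1\otimes d_2\otimes d_3})&=&\sum_i p_i\,C(|\varphi_i\rangle)\ \geq\ \frac{1}{\sqrt{D}}\sum_i p_i\Big(\sum_s C^2(P_s|\varphi_i\rangle)\Big)^{1/2}\nonumber\\
&\geq&\frac{1}{\sqrt{D}}\Big(\sum_s\Big(\sum_i p_i\,C(P_s|\varphi_i\rangle)\Big)^{2}\Big)^{1/2}\ \geq\ \frac{1}{\sqrt{D}}\Big(\sum_s C^2(\rho_{A_1|A_2A_3}^{2\otimes2\otimes2})\Big)^{1/2},\nonumber
\end{eqnarray}
where the first $\geq$ applies (26) to each pure state $|\varphi_i\rangle$ (in its unnormalized projected form, then takes square roots), the second is Minkowski's inequality for the Euclidean norms of the nonnegative vectors $\big(C(P_s|\varphi_i\rangle)\big)_s$ averaged with weights $p_i$, and the third uses the convex-roof bound of the previous paragraph together with monotonicity of $t\mapsto t^2$ on $[0,\infty)$. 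Squaring both sides produces the inequality (28) claimed by the lemma.

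I expect two points to demand care. The first is bookkeeping for unnormalized states: one must fix, consistently, that the concurrence defined through (25) is homogeneous of degree two under rescaling, $C(\lambda|\psi\rangle)=|\lambda|^2 C(|\psi\rangle)$, so that the convex roof (23) is meaningful for the unnormalized matrix (27) and $\{p_i,\,P_s|\varphi_i\rangle\}$ genuinely decomposes it; relatedly, one works with the optimal decomposition for $C$ rather than for $C^2$, since $C^2$ is not itself a convex roof. The second, and more structural, is that the two estimates being combined run in opposite directions --- a ``$\geq$'' from (26) at the level of each pure component and a ``$\leq$'' from the convex roof of each substate --- so that chaining them naively would collapse the right-hand sum over substates; Minkowski's inequality on the substate-indexed vectors is precisely the device that keeps the full sum $\sum_s$ intact while the estimate is pushed through the ensemble average. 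The remaining steps are routine, and this lemma then serves as the base case from which the higher-dimensional monogamy relations below are assembled, in the same iterative fashion as in the proof of Theorem 3.
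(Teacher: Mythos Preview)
Your proposal is correct and follows essentially the same route as the paper: apply the pure-state bound (26) to each member of an optimal ensemble, swap the ensemble average past the $\ell^2$ sum over substates via Minkowski, and then use that the projected ensemble is an admissible decomposition of each $\rho_{A_1|A_2A_3}^{2\otimes2\otimes2}$ to invoke the convex roof. The only cosmetic difference is that the paper carries the symbol $\min$ through the chain rather than fixing an optimal decomposition at the outset; your more explicit handling of homogeneity for the unnormalized projected vectors is a point the paper leaves implicit.
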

\begin{proof}
For a mixed state $\rho_{A_1|A_2A_3}^{d_1\otimes d_2\otimes d_3}=\sum_i p_i|\varphi_i\rangle_{A_1|A_2A_3}^{d_1\otimes d_2\otimes d_3}\langle\varphi_i|$, we have
\begin{eqnarray}
C(\rho_{A_1|A_2A_3}^{d_1\otimes d_2\otimes d_3})&=&min \sum_ip_iC(|\varphi_i\rangle_{A_1|A_2A_3}^{d_1\otimes d_2\otimes d_3})\nonumber\\
&\geq& min \frac{1}{\sqrt{(d_1-1)(d_2-1)(d_3-1)}}\sum_ip_i(\sum C^2(|\varphi_i\rangle_{A_1|A_2A_3}^{2\otimes2\otimes2}))^\frac{1}{2}\nonumber\\
&\geq& min
\frac{1}{\sqrt{(d_1-1)(d_2-1)(d_3-1)}}[\sum(\sum_ip_iC(|\varphi_i\rangle_{A_1|A_2A_3}^{2\otimes2\otimes2}))^2]^\frac{1}{2}   \nonumber \\
&\geq& \frac{1}{\sqrt{(d_1-1)(d_2-1)(d_3-1)}}[\sum(min \sum_ip_iC(|\varphi_i\rangle_{A_1|A_2A_3}^{2\otimes2\otimes2}))^2]^\frac{1}{2}    \nonumber\\
&=&\frac{1}{\sqrt{(d_1-1)(d_2-1)(d_3-1)}}[\sum C^2(\rho_{A_1|A_2A_3}^{2\otimes2\otimes2})]^\frac{1}{2},
\end{eqnarray}
where we have used the Minkowski inequality $\sum_m(\sum_n a_{mn}^2)^\frac{1}{2}\geq(\sum_n(\sum_ma_{mn})^2)^\frac{1}{2}$ in the second inequality, the minimum is taken over all possible pure state decompositions of mixed state $\rho_{A_1|A_2A_3}^{d_1\otimes d_2\otimes d_3}$ in the first three minimizations, while the minimum in the last inequality is taken over all pure state decompositions of $\rho_{A_1|A_2A_3}^{2\otimes2\otimes2}$.
\end{proof}
By using $(1+t)^x\geq 1+t^x$ for $x\geq1$ and $0\leq t\leq1$, we can easily obtain $(\sum_{i=1}^nc_i)^x\geq\sum_{i=1}^nc_i^x$ for nonnegativity numbers $c_i$, $i=1, \cdots, n$, and $x\geq1$. The monogamy inequalities of the $\alpha$th ($\alpha\geq2$) power of the concurrence for $n$-qudit quantum states are given as follows.
\begin{thm}\label{thm:1}
In tripartite quantum systems $\mathcal{H}^{d_1} \otimes\mathcal{H}^{d_2}\otimes\mathcal{H}^{d_3}$, assuming $0\leq k^\omega\leq k\leq1$, $\omega\geq1$, and $\alpha\geq2$,

(1) if $C^2(\rho_{A_1A_3}^{2\otimes2})\leq k^\omega C^2(\rho_{A_1A_2}^{2\otimes2})$, the concurrence satisfies
\begin{eqnarray}
C^{\alpha}(\rho_{A_1|A_2A_3}^{d_1\otimes d_2\otimes d_3})\geq &\sum&[\frac1{(d_1-1)(d_2-1)(d_3-1)}]^\frac\alpha2[(\frac12)^{\frac{\alpha}{2}}C^{\alpha}(\rho_{A_1A_2}^{2\otimes2})\nonumber\\
&+&\frac{(1+k^\omega)^{\frac{\alpha}{2}}-(\frac12)^{\frac{\alpha}{2}}}{k^{\frac{\omega\alpha}{2}}}C^{\alpha}(\rho_{A_1A_3}^{2\otimes2})].
\end{eqnarray}

(2) if $C^2(\rho_{A_1A_2}^{2\otimes2})\leq k^\omega C^2(\rho_{A_1A_3}^{2\otimes2})$, the concurrence satisfies
\begin{eqnarray}
C^{\alpha}(\rho_{A_1|A_2A_3}^{d_1\otimes d_2\otimes d_3})\geq &\sum&[\frac1{(d_1-1)(d_2-1)(d_3-1)}]^\frac\alpha2[(\frac12)^{\frac{\alpha}{2}}C^{\alpha}(\rho_{A_1A_3}^{2\otimes2})\nonumber\\
&+&\frac{(1+k^\omega)^{\frac{\alpha}{2}}-(\frac12)^{\frac{\alpha}{2}}}{k^{\frac{\omega\alpha}{2}}}C^{\alpha}(\rho_{A_1A_2}^{2\otimes2})].
\end{eqnarray}
\end{thm}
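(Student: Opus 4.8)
The plan is to push the statement down to the $2\otimes2\otimes2$ substates furnished by Lemma 3, apply the CKW inequality there, and then invoke the pointwise estimate of Lemma 1(2). First I would take the inequality of Lemma 3,
$$C^2(\rho_{A_1|A_2A_3}^{d_1\otimes d_2\otimes d_3})\ \ge\ \frac{1}{(d_1-1)(d_2-1)(d_3-1)}\sum C^2(\rho_{A_1|A_2A_3}^{2\otimes2\otimes2}),$$
and raise both sides to the power $\tfrac{\alpha}{2}\ge1$. Because $x\mapsto x^{\alpha/2}$ is increasing on $[0,\infty)$ and $(\sum_i c_i)^{\alpha/2}\ge\sum_i c_i^{\alpha/2}$ for nonnegative $c_i$ (the elementary fact recorded just before the theorem), this gives
$$C^{\alpha}(\rho_{A_1|A_2A_3}^{d_1\otimes d_2\otimes d_3})\ \ge\ \Big[\tfrac{1}{(d_1-1)(d_2-1)(d_3-1)}\Big]^{\alpha/2}\sum C^{\alpha}(\rho_{A_1|A_2A_3}^{2\otimes2\otimes2}),$$
so it remains to bound each summand $C^{\alpha}(\rho_{A_1|A_2A_3}^{2\otimes2\otimes2})$ from below by the bracketed quantity in the theorem.

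For a fixed $2\otimes2\otimes2$ substate I would use the CKW-type monogamy relation for (sub-normalized) three-qubit states, $C^2(\rho_{A_1|A_2A_3}^{2\otimes2\otimes2})\ge C^2(\rho_{A_1A_2}^{2\otimes2})+C^2(\rho_{A_1A_3}^{2\otimes2})$ — the same inequality already exploited in the proof of Theorem 2, now applied to the matrix in (28). Raising to the power $\tfrac{\alpha}{2}$ and, in case (1), assuming $C^2(\rho_{A_1A_2}^{2\otimes2})>0$ (if it vanishes the hypothesis forces $C^2(\rho_{A_1A_3}^{2\otimes2})=0$ and the bound is trivial), I factor out $C^{\alpha}(\rho_{A_1A_2}^{2\otimes2})$ and set $t=C^2(\rho_{A_1A_3}^{2\otimes2})/C^2(\rho_{A_1A_2}^{2\otimes2})$. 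Hypothesis (1) says $0\le t\le k^{\omega}\le k\le1$, so Lemma 1(2) with $x=\tfrac{\alpha}{2}\ge1$ yields
$$(1+t)^{\alpha/2}\ \ge\ \Big(\tfrac12\Big)^{\alpha/2}+\frac{(1+k^{\omega})^{\alpha/2}-(\tfrac12)^{\alpha/2}}{k^{\omega\alpha/2}}\,t^{\alpha/2};$$
multiplying by $C^{\alpha}(\rho_{A_1A_2}^{2\otimes2})$ gives exactly $(\tfrac12)^{\alpha/2}C^{\alpha}(\rho_{A_1A_2}^{2\otimes2})+\tfrac{(1+k^{\omega})^{\alpha/2}-(\tfrac12)^{\alpha/2}}{k^{\omega\alpha/2}}C^{\alpha}(\rho_{A_1A_3}^{2\otimes2})$. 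Summing over all substates and multiplying by the dimension prefactor produces the asserted inequality in case (1); case (2) is obtained verbatim after interchanging $A_2$ and $A_3$.

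The main obstacle I anticipate is justifying the qubit monogamy step $C^2\ge C^2_{A_1A_2}+C^2_{A_1A_3}$ for the \emph{unnormalized} three-qubit matrix in (28): one has to confirm that the convex-roof concurrence and the CKW bound scale consistently when a scalar $0\le p\le 1$ is factored out (i.e.\ that the inequality is homogeneous of the correct degree), or else carry the normalization constant through the chain and check it does not reverse any inequality. A secondary point that needs a careful wording is the hypothesis itself: the condition $C^2(\rho_{A_1A_3}^{2\otimes2})\le k^{\omega}C^2(\rho_{A_1A_2}^{2\otimes2})$ has to be read as holding for every $2\otimes2\otimes2$ substate entering the sum (otherwise the term-by-term use of Lemma 1(2) should be confined to those substates that satisfy it, with the remaining terms estimated trivially).
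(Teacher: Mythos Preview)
Your argument is essentially identical to the paper's: Lemma~3 followed by the power inequality $(\sum c_i)^{\alpha/2}\ge\sum c_i^{\alpha/2}$ to reduce to individual $2\otimes2\otimes2$ substates, then the CKW bound $C^2(\rho_{A_1|A_2A_3})\ge C^2(\rho_{A_1A_2})+C^2(\rho_{A_1A_3})$ and Lemma~1(2) applied termwise with $t=C^2(\rho_{A_1A_3}^{2\otimes2})/C^2(\rho_{A_1A_2}^{2\otimes2})$. The two concerns you flag---homogeneity of the CKW inequality for the unnormalized block (27) and the need for the hypothesis to hold for \emph{every} substate in the sum---are not addressed in the paper either; the paper simply invokes [26] on the substate and states the hypothesis once for a generic $\rho^{2\otimes2}$, so your reading of these as implicit assumptions is accurate.
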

\begin{proof}
For $\alpha\geq2$, we have
\begin{eqnarray}
C^{\alpha}(\rho_{A_1|A_2A_3}^{d_1\otimes d_2\otimes d_3})&\geq&[\sum\frac{1}{(d_1-1)(d_2-1)(d_3-1)}C^2(\rho_{A_1|A_2A_3}^{2\otimes2\otimes2})]^\frac\alpha2\nonumber\\
&=&[\frac1{(d_1-1)(d_2-1)(d_3-1)}]^\frac\alpha2(\sum C^2(\rho_{A_1|A_2A_3}^{2\otimes2\otimes2}))^\frac\alpha2\nonumber\\
&\geq&\sum[\frac1{(d_1-1)(d_2-1)(d_3-1)}]^\frac\alpha2(C^2(\rho_{A_1|A_2A_3}^{2\otimes2\otimes2}))^\frac\alpha2,
\end{eqnarray}
where the first inequality follows from Lemma 3, the second inequality is due to $(\sum_{i=1}^nc_i)^x\geq\sum_{i=1}^nc_i^x$ for nonnegativity numbers $c_i$, $i=1, \cdots, n$, and $x\geq1$, and $\sum$ stands for summing over all possible $2\otimes 2\otimes 2$ mixed substates of $\rho_{A_1|A_2A_3}^{d_1\otimes d_2\otimes d_3}$.
Assuming $C^2(\rho_{A_1A_3}^{2\otimes2})\leq k^\omega C^2(\rho_{A_1A_2}^{2\otimes2})$, by using $C^2(\rho_{A_1|A_2A_3})\geq C^2(\rho_{A_1A_2})+C^2(\rho_{A_1A_3})$ [26] and inequality (7) of Lemma 1, one has
\begin{eqnarray}
C^\alpha(\rho_{A_1|A_2A_3}^{2\otimes2\otimes2})
&\geq&(C^2(\rho_{A_1A_2}^{2\otimes2})+C^2(\rho_{A_1A_3}^{2\otimes2}))^\frac\alpha2\nonumber\\
&=&C^{\alpha}(\rho_{A_1A_2}^{2\otimes2})(1+\frac{C^2(\rho_{A_1A_3}^{2\otimes2})}{C^2(\rho_{A_1A_2}^{2\otimes2})})^\frac\alpha2\nonumber\\
&\geq& C^{\alpha}(\rho_{A_1A_2}^{2\otimes2})[(\frac12)^\frac\alpha2+\frac{(1+k^\omega)^\frac{\alpha}{2}-(\frac12)^\frac{\alpha}{2}}{k^{ \frac{\omega \alpha}{2}}}(\frac{C^2(\rho_{A_1A_3}^{2\otimes2})}{C^2(\rho_{A_1A_2}^{2\otimes2})})^\frac\alpha2]\nonumber\\
&=&(\frac12)^{\frac{\alpha}{2}}C^{\alpha}(\rho_{A_1A_2}^{2\otimes2})+\frac{(1+k^\omega)^{\frac{\alpha}{2}}-(\frac12)^{\frac{\alpha}{2}}}{k^{\frac{\omega\alpha}{2}}}C^{\alpha}(\rho_{A_1A_3}^{2\otimes2}),
\end{eqnarray}
where $0\leq k^\omega\leq k\leq1$, $\omega\geq1$, and $\alpha\geq2$. Combining (32) and (33), one gets (30). Using similar methods, we obtain the inequality (31).
\end{proof}

For a pure $n$-partite quantum state $|\varphi\rangle^{d_1\otimes d_2\otimes \cdots \otimes d_n}\in H_1^{d_1}\otimes H_2^{d_2}\otimes \cdots\otimes H_n^{d_n}$ under bipartite partition $A_1|A_2\cdots A_n$ has the form
\begin{eqnarray}
|\varphi\rangle_{A_1|A_2\cdots A_n}^{d_1\otimes d_2\otimes\cdots\otimes d_n}=\sum_{a_1=1}^{d_1}\sum_{a_2=1}^{d_2}\cdots\sum_{a_n=1}^{d_n}h_{a_1|a_2\cdots a_n}|a_1a_2\cdots a_n\rangle,
\end{eqnarray}
where $h_{a_1|a_2\cdots a_n}\in$ $\mathbb{C}$, $\sum_{a_1a_2\cdots a_n}h_{a_1|a_2\cdots a_n}h_{a_1|a_2\cdots a_n}^\ast=1$. So we get
\begin{eqnarray}
C^2(|\varphi\rangle_{d_1\otimes d_2\otimes\cdots\otimes d_n})=\sum_{a_1,b_1=1}^{d_1}\sum_{a_2,b_2=1}^{d_2}\cdots\sum_{a_n,b_n=1}^{d_n}|h_{a_1|a_2\cdots a_n}h_{b_1|b_2\cdots b_n}-h_{b_1|a_2\cdots a_n}h_{a_1|b_2\cdots b_n}|^2.
\end{eqnarray}

According to (35),
in multipartite quantum systems $\mathcal{H}^{d_1} \otimes\mathcal{H}^{d_2}\otimes\cdots\otimes\mathcal{H}^{d_n}$, the concurrence of the mixed state $\rho_{d_1\otimes d_2\otimes\cdots\otimes d_n}$ satisfies:
$C^2(\rho_{A_1|A_2\cdots A_n}^{d_1\otimes d_2\otimes\cdots\otimes d_n})\geq\sum\frac{1}{(d_1-1)(d_2-1)\cdots(d_n-1)}C^2(\rho_{A_1|A_2\cdots A_n}^{2\otimes2\otimes\cdots\otimes2})$,
where $\sum$ sums over all possible mixed substates $\rho_{A_1|A_2\cdots A_n}^{2\otimes 2\otimes\cdots\otimes 2}$.
For an $n$-qubit quantum states $\rho$ under bipartite partition $A_1|A_2A_3\cdots A_n$, the concurrence satisfies $C^2(\rho_{A_1|A_2A_3\cdots A_n})\geq C^2(\rho_{A_1A_2})+C^2(\rho_{A_1A_3})+\cdots+C^2(\rho_{A_1A_n})$ [4].
Then using the similar method as Theorem 3, we can generalize our result to $n$-partite quantum systems and obtain the following theorem:
\begin{thm}
In multipartite quantum systems $\mathcal{H}^{d_1}_{A_1} \otimes\mathcal{H}^{d_2}_{A_2}\otimes\cdots\otimes\mathcal{H}^{d_n}_{A_n}$, assuming real numbers $0\leq k^\omega\leq k\leq1$, $\omega\geq1$, and $\alpha\geq2$, where $\sum$ represents the sum of all possible mixed substates $\rho_{A_1|A_2\cdots A_n}^{2\otimes 2\otimes\cdots\otimes 2}$ and $\mu=\frac{(1+k^\omega)^{\frac{\alpha}{2}}-(\frac{1}{2})^{\frac{\alpha}{2}}}{k^{\frac{\omega\alpha}{2}}}$, we have the following results:

(1) if $k^\omega C^2(\rho_{A_1A_i}^{2\otimes2})\geq\sum_{j=i+1}^nC^2(\rho_{A_1A_j}^{2\otimes2})$ for $i=2,\cdots,n-1$ and $n\geq3$, then
\begin{eqnarray}
\begin{aligned}
C^\alpha(\rho_{A_1|A_2\cdots A_n}^{d_1\otimes d_2\otimes\cdots\otimes d_n})\geq&\sum[\frac1{(d_1-1)(d_2-1)\cdots(d_n-1)}]^\frac\alpha2[(\frac{1}{2})^{\frac{\alpha}{2}}(C^\alpha(\rho_{A_1A_2}^{2\otimes2})+\mu C^\alpha(\rho_{A_1A_3}^{2\otimes2})\\
&+\cdots+\mu^{n-3}C^\alpha(\rho_{A_1A_{n-1}}^{2\otimes2}))+\mu^{n-2}C^\alpha(\rho_{A_1A_n}^{2\otimes2})].
\end{aligned}
\end{eqnarray}

(2) if $C^2(\rho_{A_1A_i}^{2\otimes2})\leq k^\omega\sum_{j=i+1}^nC^2(\rho_{A_1A_j}^{2\otimes2})$ for for $i=2,\cdots,n-1$ and $n\geq3$, then
\begin{eqnarray}
\begin{aligned}
C^\alpha(\rho_{A_1|A_2\cdots A_n}^{d_1\otimes d_2\otimes\cdots\otimes d_n})\geq& \sum[\frac1{(d_1-1)(d_2-1)\cdots(d_n-1)}]^\frac\alpha2[\mu(C^\alpha(\rho_{A_1A_2}^{2\otimes2})+(\frac{1}{2})^{\frac{\alpha}{2}}C^\alpha(\rho_{A_1A_3}^{2\otimes2})\\
&+\cdots+(\frac{1}{2})^{\frac{(n-3)\alpha}{2}}C^\alpha(\rho_{A_1A_{n-1}}^{2\otimes2}))+(\frac{1}{2})^{\frac{(n-2)\alpha}{2}}C^\alpha(\rho_{A_1A_n}^{2\otimes2})].
\end{aligned}
\end{eqnarray}

(3) if $k^\omega C^2(\rho_{A_1A_i}^{2\otimes2})\geq\sum_{j=i+1}^nC^2(\rho_{A_1A_j}^{2\otimes2})$ for $i=2,\cdots,m$ and $C^2(\rho_{A_1A_i}^{2\otimes2})\leq k^\omega\sum_{j=i+1}^nC^2(\rho_{A_1A_j}^{2\otimes2})$ for $i=m+1,\cdots,n-1$, $\forall$ $2\leq m\leq n-2$, $n\geq4$, we have
\begin{eqnarray}
\begin{aligned}
C^\alpha(\rho_{A_1|A_2\cdots A_n}^{d_1\otimes d_2\otimes\cdots\otimes d_n})\geq&
\sum[\frac1{(d_1-1)(d_2-1)\cdots(d_n-1)}]^\frac\alpha2[(\frac12)^{\frac{\alpha}{2}}(C^\alpha(\rho_{A_1A_2}^{2\otimes2})+\mu C^\alpha(\rho_{A_1A_3}^{2\otimes2})\\
&+\cdots+\mu^{m-2}C^\alpha(\rho_{A_1A_m}^{2\otimes2}))+\mu^m(C^\alpha(\rho_{A_1A_{m+1}}^{2\otimes2})+(\frac12)^{\frac{\alpha}{2}}C^\alpha(\rho_{A_1A_{m+2}}^{2\otimes2})\\
&+\cdots+(\frac12)^{\frac{(n-m-2)\alpha}{2}}C^\alpha(\rho_{A_1A_{n-1}}^{2\otimes2}))+\mu^{m-1}(\frac12)^{\frac{(n-m-1)\alpha}{2}}C^\alpha(\rho_{A_1A_n}^{2\otimes2})].
\end{aligned}
\end{eqnarray}
\end{thm}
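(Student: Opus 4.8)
The plan is to combine the opening of the proof of Theorem~3 (reduction of the qudit state to its $2\otimes\cdots\otimes 2$ substates) with the iterative ``peeling'' argument already used in the proof of Theorem~2, but run now with the monogamy exponent taken to be $2$ and with part~(2) of Lemma~1 in place of part~(1): since $\alpha\geq 2$, the exponent $x=\alpha/2\geq 1$, and each hypothesis forces the ratio of squared concurrences that is factored out at a given step to lie in $[0,k^\omega]\subseteq[0,1]$, which is precisely the regime of Lemma~1(2).

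First I would pass from the qudit state to its qubit substates. Starting from the multipartite substate inequality displayed just before the theorem (the $n$-party analogue of Lemma~3), raise both sides to the power $\alpha/2\geq 1$ and invoke $(\sum_i c_i)^x\geq\sum_i c_i^x$ (recorded in the text before Theorem~3) to move the sum over substates outside, exactly as in the first display of the proof of Theorem~3, obtaining
\[
C^\alpha\big(\rho_{A_1|A_2\cdots A_n}^{d_1\otimes\cdots\otimes d_n}\big)\ \geq\ \sum\Big[\tfrac{1}{(d_1-1)\cdots(d_n-1)}\Big]^{\alpha/2}\big(C^2(\rho_{A_1|A_2\cdots A_n}^{2\otimes\cdots\otimes2})\big)^{\alpha/2}.
\]
It then suffices to lower bound $\big(C^2(\rho_{A_1|A_2\cdots A_n}^{2\otimes\cdots\otimes2})\big)^{\alpha/2}$ for a single (unnormalized) $n$-qubit substate, and for that I would apply the $n$-qubit squared-concurrence monogamy $C^2(\rho_{A_1|A_2\cdots A_n})\geq\sum_{i=2}^{n}C^2(\rho_{A_1A_i})$ of [4] --- applicable to the unnormalized substate just as in the proof of Lemma~3 --- reducing the task to bounding $\big(\sum_{i=2}^{n}C^2(\rho_{A_1A_i}^{2\otimes2})\big)^{\alpha/2}$.

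Then I would peel off the pair terms one index at a time, mirroring the proof of Theorem~2 but using Lemma~1(2). Under hypothesis~(1), at step $i$ the condition $k^\omega C^2(\rho_{A_1A_i}^{2\otimes2})\geq\sum_{j>i}C^2(\rho_{A_1A_j}^{2\otimes2})$ says exactly that $t:=\big(\sum_{j>i}C^2(\rho_{A_1A_j}^{2\otimes2})\big)/C^2(\rho_{A_1A_i}^{2\otimes2})\in[0,k^\omega]$, so Lemma~1(2) with $x=\alpha/2$ gives $\big(C^2(\rho_{A_1A_i}^{2\otimes2})+\sum_{j>i}C^2(\rho_{A_1A_j}^{2\otimes2})\big)^{\alpha/2}\geq(\tfrac12)^{\alpha/2}C^\alpha(\rho_{A_1A_i}^{2\otimes2})+\mu\big(\sum_{j>i}C^2(\rho_{A_1A_j}^{2\otimes2})\big)^{\alpha/2}$; iterating over $i=2,\dots,n-1$ produces the cascade of coefficients $(\tfrac12)^{\alpha/2},(\tfrac12)^{\alpha/2}\mu,\dots,(\tfrac12)^{\alpha/2}\mu^{n-3},\mu^{n-2}$ of part~(1). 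Under hypothesis~(2) one instead factors out the tail, so that $t=C^2(\rho_{A_1A_i}^{2\otimes2})/\big(\sum_{j>i}C^2(\rho_{A_1A_j}^{2\otimes2})\big)\in[0,k^\omega]$; Lemma~1(2) then interchanges the weights $(\tfrac12)^{\alpha/2}$ and $\mu$ at every step and yields part~(2). Under hypothesis~(3) one applies the first pattern for $i=2,\dots,m$, reaching a leftover term $\mu^{m-1}\big(\sum_{j>m}C^2(\rho_{A_1A_j}^{2\otimes2})\big)^{\alpha/2}$, and then the second pattern for $i=m+1,\dots,n-1$ on that leftover; multiplying through by $\mu^{m-1}$ produces the three grouped families of terms of part~(3). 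Substituting the resulting per-substate bound into the display above and summing over all substates completes the proof.

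I expect the substantive work to be purely bookkeeping. The one point genuinely worth checking is that, as in the proof of Lemma~3, the $n$-qubit monogamy of [4] is legitimately applied to the unnormalized substate $\rho_{A_1|A_2\cdots A_n}^{2\otimes\cdots\otimes2}$ and to its two-party reductions $\rho_{A_1A_i}^{2\otimes2}$, which rests on the scaling/convex-roof behaviour of the concurrence on subnormalized states. Beyond that, in case~(3) one must verify that when the hypothesis switches at index $m$ the accumulated prefactor is exactly $\mu^{m-1}$ and that the remaining powers of $(\tfrac12)^{\alpha/2}$ line up with the stated exponents $\tfrac{(n-m-2)\alpha}{2}$ and $\tfrac{(n-m-1)\alpha}{2}$; everything else is the routine algebra already carried out for Theorems~2 and~3.
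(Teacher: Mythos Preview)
Your proposal is correct and follows essentially the same route the paper sketches: the paper does not give a separate proof for this theorem but states that it is obtained ``using the similar method as Theorem~3'' together with the $n$-qubit monogamy $C^2(\rho_{A_1|A_2\cdots A_n})\geq\sum_{i=2}^n C^2(\rho_{A_1A_i})$ from~[4], which is exactly your reduction to substates via the $n$-party analogue of Lemma~3, the power-sum inequality, and then the iterative application of Lemma~1(2) in the pattern of Theorem~2. Your flagged caveat about applying the qubit monogamy to unnormalized substates is the same implicit assumption the paper makes in Lemma~3 and Theorem~3.
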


{\bf Remark 3.}
We have presented the monogamy inequalities of the $\alpha$th ($\alpha\geq2$) power of concurrence. By adopting the inequalities (6) of Lemma 1 and the above method, we can also obtain similar results for $0\leq\alpha\leq2$ which cover all the real numbers. As Remark 2 shows, we have found tighter results for $0\leq\alpha\leq2$.

\section{Conclusion}
In this article, we have obtained the monogamy inequalities of the Bures measure of entanglement and the geometric measure of entanglement for tripartite quantum states and $n$-qubit quantum states.
Using examples we have shown that our monogamy relations are tighter than the existing ones.
Moreover, we have discussed the monogamy inequalities of concurrence in higher-dimensional quantum systems which give rise to finer characterizations of the entanglement shareability and distribution among qudit systems.
Our results may help us understand better the monogamy nature of multipartite higher-dimensional quantum entanglement.

\textbf {Acknowledgements}
This work is supported in part by Simons Foundation under grant no. 523868 and NSFC under grant nos. 12126351 and 12126314.

\end{CJK*}
\end{document}